\newtheorem{prop}{Proposition}
\newtheorem{remark}{Remark}
\theoremstyle{definition}
\newtheorem{defn}{Definition}
\newlength{\aligntop}
\newlength{\alignbot}
\begin{document}
%
% paper title
% can use linebreaks \\ within to get better formatting as desired
\title{Cloud Radio Access meets Heterogeneous Small Cell Networks: A Cognitive Hierarchy Perspective}
\author[D. Kopta et al.]
{Nof Abuzainab and Walid Saad
	\\
	Wireless@VT, Department of Electrical and Computer Engineering, Virginia Tech, Blacksburg, VA, USA\\
	Emails:\{nof, walids\}@vt.edu\\ 
	\vspace{-6.8ex}\thanks{This research was supported by the U.S. Office of Naval Research (ONR) under Grant N00014-15-1-2709}}

\maketitle

% Note that keywords are not normally used for peerreview papers.
%\begin{IEEEkeywords}
%\end{IEEEkeywords}

% For peer review papers, you can put extra information on the cover
% page as needed:
% \ifCLASSOPTIONpeerreview
% \begin{center} \bfseries EDICS Category: 3-BBND \end{center}
% \fi
%
% For peerreview papers, this IEEEtran command inserts a page break and
% creates the second title. It will be ignored for other modes.
\IEEEpeerreviewmaketitle

\begin{abstract}
In this paper, the problem of distributed power allocation is considered for the downlink of a cloud radio access network (CRAN) that is coexisting with a heterogeneous network. In this multi-tier system, the heterogeneous network base stations (BSs) as well as the CRAN remote radio heads seek to choose their optimal power to maximize their users' rates. The problem is formulated as a noncooperative game in which the players are the CRAN's cloud and the BSs. Given the difference of capabilities between the CRAN and the various BSs, the game is cast within the framework of cognitive hierarchy theory. In this framework, players are organized in a hierarchy in such a way that a player can choose its strategy while considering players of only similar or lower hierarchies. Using such a hierarchical design, one can reduce the interference caused by the CRAN and high-powered base stations on low-powered BSs. For this game, the properties of the Nash equilibrium and the cognitive hierarchy equilibrium are analyzed. Simulation results show that the proposed cognitive hierarchy model yields significant performance gains, in terms of the total rate, reaching up to twice the rate achieved by a classical noncooperative game's Nash equilibrium.
\end{abstract}
\section{Introduction}

The deployment of a cloud-based radio access network (CRAN) is expected to be an integral part of 5G networks \cite{cloudbenefits2}. In CRAN, remote radio heads (RRHs), which are small radio antennas responsible for transmission, cooperate to serve their wireless users. To achieve such coordination, the RRHs are connected through a fronthaul link to a control unit (CU) that performs efficient signal processing. However, in CRAN, the fronthaul links are often of finite capacity \cite{hetcloud}, and thus, the number of RRHs connected to the CU will be limited. Thus, the CRAN will need to co-exist with the small cell base stations (BSs) of an existing heterogeneous cellular network (HetNet). In particular, in the downlink, the presence of a large number of RRHs can increase the interference on HetNet users. Thus, it is imperative to devise new solutions for optimized interference management in such a hybrid CRAN-HetNet architecture \cite{hetcloud}.

In the existing literature, the problem of downlink interference mitigation has been considered mainly within the context of HetNets \cite{hetint1,hetint2,hetint3}. The objective is to reduce to the interference caused by the high-powered macro BSs on the low-powered femto and pico BSs. These techniques range from power control \cite{hetint1} to the allocation of frequency and time resources such as in \cite{hetint2} and \cite{hetint3}. More recent works such as in \cite{cloud0} considered interference mitigation for a CRAN using MIMO techniques. The authors in \cite{cloud1} studied the challenge of limited fronthaul capacity in CRAN, and derived the optimal power allocations that maximize the system's sum-rates under a constrained fronthaul. In \cite{cloud2}, the authors proposed a centralized resource allocation technique for the downlink of a CRAN coexisting with a macro BS. However, this existing body of work is either focused on CRAN or HetNets, in isolation, or in a very restricted setting. Thus, none of these works addresses the problem of interference management in hybrid CRAN/small cell architectures.

The main contribution of this paper is to introduce a novel framework for power control and interference management that can ensure an efficient HetNet and CRAN co-existence. We formulate the problem as a noncooperative game between the CRAN and HetNet's BSs. In this game, each base station seeks to choose its optimal power allocation to maximize the sum-rate of its users. Given the heterogeneity of the system, we cast the game within the framework of \emph{cognitive hierarchy theory} \cite{Camerer_acognitive} in which players can be organized into multi-tier hierarchies depending on their capabilities, unlike in classical Stackelberg games that allow only for a two level pre-determined hierarchy \cite{stackelberg1}. For our model, the hierarchy pertains to the fact that computationally capable and high-powered BSs, such as the CRAN or the macrocell BSs can have a higher hierarchy in the network than smaller BSs, such as femtocells. Such a hierarchical organization eventually allows reducing the interference on low-powered BSs as they no longer need to consider the actions of the devices with higher power capabilities. In contrast, by being at the top of the hierarchy, high-powered BSs such as the CRAN and macrocell BSs will be more conservative in their power control decisions. For both the conventional game and the cognitive hierarchy game, we characterize the equilibrium solutions and study their properties. Simulation results show that the proposed cognitive hierarchy model yields a two-fold improvement in the total system rate, when compared with a classical game-theoretic model. The results also show considerable interference reduction on the low-powered BSs.

The paper is organized as follows: Section II presents the system model and the power allocation problem. Section III presents the game theoretic formulation of the problem as well as the Nash equilibrium and the cognitive hierarchy solutions respectively. Section IV presents the simulation results. Finally, conclusions are drawn in section V.

%Our results show on overall increase in the system performance when using the cognitive hierarchy based approach. The average increase in the total rate is $215\%$ compared to Nash equilibrium approach and $227\%$ compared to the equal power per subcarrier policy.
%Also, the results show the vast increase in the sum rate of pico and femto cells with a slight decrease in the sum rate of the CRAN and macro base stations.
\vspace{-0.2 cm}
\section{System Model}

Consider the downlink of a CRAN composed of a set $\mathcal{K}_c$ of remote radio heads (RRH) serving cooperatively a set $\mathcal{N}_c$ of users. All RRHs are connected to a central cloud-based control unit (CU) via finite capacity fronthaul link. The CU is responsible for the resource allocation and the signal processing of the RRHs, while the RRHs are responsible only for data transmission. 
 
% Further, it is  assumed $L$ subcarrier OFDMA modulation scheme.
%Thus, the CU allocates subcarrier $k$ to user $j \in \mathcal{N}_c$ with probability $q_{jk}$ and selects transmission power $p_{ik}$ of BS $i \in \mathcal{K}_c$ over subcarrier $k$.

Due to the finite capacity of the fronthaul, there is a maximum number of RRHs that could be connected to the CU. Hence, the CRAN must co-exist with a HetNet composed of a set $\mathcal{M}$ of macro BSs, a set $\mathcal{P}$ of pico BSs, and a set $\mathcal{F}$ femto BSs. Each BS $i$ $ \in \mathcal{H}=\mathcal{M} \cup \mathcal{P} \cup \mathcal{F}$ has a set $\mathcal{N}_i$ of users to serve. Each user $j \in \cup_{i \in \mathcal{H}} \mathcal{N}_i \cup \mathcal{N}_c$ is located at a distance $d_{ij}$ from BS $i \in \mathcal{H}$ or from RRH $i \in \mathcal{K}_c$.
 OFDMA transmissions of $L$ subcarriers are also assumed at each BS in $\mathcal{H}$ and RRH in the CRAN.
% BS $i$ assigns subcarrier $k$ to user $j$ with probability $q_{ijk}$ and transmits with power $p_{ik}$. 
The channel gain $h_{ijk}$ between each BS/RRH $i$ and user $j$ over subcarrier $k$ is distributed according to an independent Rayleigh fading. The received power of a user $j$ over subcarrier $k$ is
$P_{r_{ijk}}=|h_{ijk}|^2p_{ik}d_{ij}^{-\alpha}$ where $\alpha$ is the path loss exponent and $p_{ik}$ is the transmission power of BS/RRH $i$ over subcarrier $k$. Additive white Gaussian noise of variance $\sigma^2$ is present at each receiver. In our model, each BS/RRH $i$ can transmit with a maximum power $P_{i,\textrm{max}}$, and each mobile user has a single receive antenna.

In the CRAN, the CU assigns subcarrier $k$ to a single user in $\mathcal{N}_c$ and all RRHs transmit simultaneously to that user. The CU uses beamforming to coordinate the transmissions of the RRHs over subcarrier $k$.
Thus, the received rate of user $j \in \mathcal{N}_c$ over subcarrier $k$ is given by the multiple input single output capacity with power constraint on each antenna as in \cite{miso}:\vspace{-0.1 cm}\begin{equation}
R_{jk}(\boldsymbol{p}_{k})=\frac{W}{L}\log\bigg(1+\frac{\Big(\sum_{i=1}^{|\mathcal{K}_c|} |h_{ijk}|\sqrt{p_{ik}d_{ij}^{-\alpha}}\Big)^2}{\sigma^2+\sum_{l \in \mathcal{H}}|h_{ljk}|^2p_{lk}d_{lj}^{-\alpha}}\bigg),\label{ratecloud}
 \end{equation}\normalsize where $\boldsymbol{p}_k$ is the vector of power allocations of all RRHs over subcarrier $k$ and $W$ is the transmission bandwidth. It has been shown in \cite{miso} that the optimal signaling to achieve this capacity is to use beamforming weight vector $\boldsymbol{v}$ where each entry $v_i$ is the weight of RRH $i$ and is given by:\vspace{-0.25 cm}
\begin{equation}\small
v_i=\frac{h^*_{ijk}\sqrt{p_{ik}d_{ij}^{-\alpha}}}{|h_{ijk}|\sum_{l \in \mathcal{K}_c}\sqrt{p_{lk}d_{lj}^{-\alpha}}}.
\vspace{-0.1 cm}
\end{equation}\normalsize Based on this transmission scheme and the rate equation in (\ref{ratecloud}), and in order to ensure fairness, the CU assigns subcarrier $k$ to user $j^* \in \mathcal{N}_c$ such that:
\vspace{-0.15 cm}
\begin{equation}\small
j^*=\arg \max_j \frac{\sum_{i=1}^{|\mathcal{K}_c|} |h_{ijk}|\sqrt{d_{ij}^{-\alpha}}}{\bar{R}_{jk}}, \label{fair}
\vspace{-0.1 cm}\end{equation}\normalsize assuming that transmissions occur within time frames and $\bar{R}_{jk}$ is the average rate over the previous time slots. Since each BS assigns each subcarrier to one of its users, hereinafter, we drop the user index $j$ for brevity. The CU finds the optimal power allocations $\boldsymbol{p}$ for all RRHs that maximize the sum of rates over $L$ subcarriers:
\vspace{-0.2 cm} \begin{eqnarray}
&&\max_{\boldsymbol{p}}\sum_{k=1}^{L} R_{k}(\boldsymbol{p}_{k}),\hspace{0.2 cm}\textrm{s.t.} \sum_{k=1}^L p_{ik} \leq P_{i,\textrm{max}}, \hspace{0.2 cm} \forall i \in \mathcal{K}_c.
\end{eqnarray}
%where $\boldsymbol{p}_k$ is the vector of power allocations of all RRHs over subcarrier $k$.
 As for the BSs in $\mathcal{H}$, the received rate at each user $j$ in $\mathcal{N}_i$ from BS $i$ transmission will be: \begin{equation}
 R_{ijk}(p_{ik})=\frac{W}{L}\log\Big(1+ \frac{|h_{ijk}|^2p_{ik}d_{ij}^{-\alpha}}{\sigma^2+ \sum_{l \in \mathcal{H} \cup \mathcal{K}_c, l \neq i}|h_{ljk}|^2p_{lk}d_{lj}^{-\alpha}}\Big),\label{BSrate}
\end{equation}
% where the interference I is given by:
%\begin{equation}
%I=\sum_{l \in \mathcal{H} \cup \mathcal{K}_c, l \neq i}|h_{ljk}|^2p_{lk}d_{lj}^{-\alpha}.
%\end{equation}
Based on (\ref{BSrate}), BS $i$ allocates subcarrier $k$ to user $j^*$ in $\mathcal{N}_i$ such that:
$j^*=\textrm{arg} \max_j \frac{|h_{ijk}|}{\bar{R}_{jk}}.$ Each BS $i$ is required to find its transmission powers $p_{ik}$ so that the sum of rates over all subcarriers is maximized:\begin{eqnarray}
&&\max_{p_{ik}}\sum_{k=1}^{L} R_{ik}(p_{ik}),\hspace{0.2 cm}\textrm{s.t.} \sum_{k=1}^L p_{ik} \leq P_{i,\textrm{max}}. \label{BSopt} 
\end{eqnarray}
The BSs in the HetNet operate independently of each other and of the CRAN. Hence, the CU and the BSs must find their optimal transmission powers in a distributed manner. However, the RRHs and the BSs will mutually interfere which, in turn, affects their users' received rates. This motivates the adoption of a game-theoretic approach, as explained next.

\section{Distributed Power Allocation: Game Theory and Cognitive Hierarchy}
This problem is formulated as a static noncooperative game defined by the triplet $(\mathcal{P},(\mathcal{S}_i)_{i \in \mathcal{P}}, (\mathcal{U}_i)_{i \in \mathcal{P}})$ where the $\mathcal{P}=\mathcal{H}\cup \{CU\}$ is the set of players. The strategy vector $\boldsymbol{p}_i$ of each player $i$ is to choose its transmission power values over all the subcarriers. Hence, the strategy set for BS $i$ in $\mathcal{H}$ is: $\mathcal{S}_i=\{\boldsymbol{p}_i \in [0,P_{i,\textrm{max}}]^{L} \hspace{0.1 cm} \textrm{s.t.} \hspace{0.1 cm} \sum_{k=1}^Lp_{ik}=P_{i,\textrm{max}} \}$, and the strategy set for the CU in the CRAN is $\mathcal{S}_C=\{\boldsymbol{p_c} \in \prod_{i \in \mathcal{K}_c}[0,P_{i,\textrm{max}}] \hspace{0.1 cm} \textrm{s.t.} \hspace{0.1 cm} \sum_{k=1}^Lp_{ik}=P_{i,\textrm{max}} \forall i \in \mathcal{K}_c\}$. The utility for each BS $i$ will be:
\begin{eqnarray}
 U_i(\boldsymbol{p}_i,\boldsymbol{p}_{-i})=\sum_{k=1}^LR_{ik}(p_{ik})
\end{eqnarray}\normalsize where $\mathcal{V}=\mathcal{H} \cup \mathcal{K}_c$ and $d_{ik}$ is the distance between BS $i$ and the user assigned to subcarrier $k$. Similarly, the CU's utility is:
\begin{eqnarray} U_C(\boldsymbol{p}_c,\boldsymbol{p}_{-c})=\sum_{k=1}^LR_k(\boldsymbol{p}_k) 
 \label{ucloud}
\end{eqnarray}\normalsize 

\subsection{Nash Equilibrium Solution}
The conventional solution for this formulated game is the so-called Nash equilibrium \cite{gametheory}. In particular, a strategy profile $\boldsymbol{p}^*$ is said to constitute a Nash equilibrium (NE) if:
\begin{equation}\small
U_i(\boldsymbol{p}^*_i,\boldsymbol{p}^*_{-i}) \geq U_i(\boldsymbol{p}_i,\boldsymbol{p}^*_{-i}) \hspace{0.4 cm} \forall i \in \mathcal{P}, \boldsymbol{p}_i \in \mathcal{S}_i.
\end{equation}where $\boldsymbol{p}_{-i}$ is the vector of strategies of all players except $i$. To derive the NE for this game, we first determine the best response strategy for each player $i \in \mathcal{P}$. For each BS $i \in \mathcal{H}$, the best response is the well known water-filling solution \cite{waterfilling}.
\begin{remark}
\emph{Given a fixed power policy $\boldsymbol{p}_{-i}$, the optimal strategy (best response) of BS $i$ is given by:
%\begin{equation}
%p_{ik}=\left\{
%\begin{array}{ll}
%\frac{1}{\mu_i}-\frac{1}{c_{ik}}, \hspace{0.2 cm} \textrm{if}\hspace{0.1 cm}  c_{ik} \geq \mu_i \\
%0, \hspace{0.1 cm} \textrm{otherwise}
%\end{array}
%\right.
%\end{equation}
\begin{equation}
\small
p_{ik}=\Big(\frac{W}{L\mu_i}-\frac{1}{c_{ik}}\Big)^+,
\end{equation}
where $c_{ik}$ is given by:
\begin{equation}
\small
c_{ik}=\frac{|h_{ik}|^2d_{ik}^{-\alpha}}{\sigma^2+ \sum_{l \in \mathcal{V}, l \neq i}|h_{lk}|^2p_{lk}d_{lk}^{-\alpha}},
\end{equation}
and $\mu_i$ satisfies: $\small
\sum_{k=1}^{L}\Big(\frac{W}{L\mu_i}-\frac{1}{c_{ik}}\Big)^+=P_{i,\textrm{max}}$}.
\end{remark}
This follows from the fact that utility function is concave in $\boldsymbol{p}_i$ since it is a sum of log functions such that each term is a log function of $p_{ik}$ (for $1 \leq k \leq L$). Also, the constraint function of $p_{ik}$ in (\ref{BSopt}) is a linear function of $p_{ik}$ (for $1 \leq k \leq L$). Thus, the optimal solution is found using the KKT conditions.

\begin{prop}
\emph{The best response of the CU is the solution of the nonlinear equations:
\begin{eqnarray}
&&\forall i \in \mathcal{K}_c, 1 \leq k \leq L, \nonumber\\
&&\frac{W\Big(c^2_{ik}-\frac{c_{ik}}{\sqrt{p_{ik}}}\sum_{l \neq i, l \in \mathcal{K}_c}c_{lk}\sqrt{p_{lk}}\Big)}{L\Big(e_k+\Big(\sum_{i=1}^{|\mathcal{K}_c|} |h_{ik}|\sqrt{p_{ik}d_{ik}^{-\alpha}}\Big)^2\Big)}-\mu_i=0,\nonumber\\
&&\mu_i\Big(\sum_{k=1}^Lp_{ik}-P_{i,\textrm{max}}\Big)=0, \hspace{0.1 cm} \forall i \in \mathcal{K}_c, 
\end{eqnarray}  
where $c_{ik}$ and $e_k$ are given by respectively:}

$c_{ik}= |h_{ik}|\sqrt{d_{ik}^{-\alpha}},$

$e_k=\sigma^2+ \sum_{l \in \mathcal{H}, }|h_{lk}|^2p_{lk}d_{lk}^{-\alpha}$.
\vspace{-0.2 cm}
\end{prop}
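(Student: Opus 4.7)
The plan is to derive the best response of the CU directly from the Karush--Kuhn--Tucker (KKT) first-order conditions of the constrained maximization of $U_C$ over $\mathcal{S}_C$. Fixing the HetNet powers $\boldsymbol{p}_{-c}$, one observes that the denominator $e_k = \sigma^2 + \sum_{l \in \mathcal{H}}|h_{lk}|^2 p_{lk}d_{lk}^{-\alpha}$ in (\ref{ratecloud}) is constant in the CU's variables, so that only the numerator $A_k \triangleq \sum_{i \in \mathcal{K}_c} c_{ik}\sqrt{p_{ik}}$, with $c_{ik}=|h_{ik}|\sqrt{d_{ik}^{-\alpha}}$, depends on the CU's decision variables $\{p_{ik}\}_{i\in\mathcal{K}_c,\,1\le k\le L}$.

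I would then form the Lagrangian
\begin{equation*}
\mathcal{L}(\boldsymbol{p}_c,\boldsymbol{\mu}) = \sum_{k=1}^L \frac{W}{L}\log\!\Big(1+\frac{A_k^2}{e_k}\Big) - \sum_{i\in\mathcal{K}_c}\mu_i\Big(\sum_{k=1}^L p_{ik}-P_{i,\textrm{max}}\Big),
\end{equation*}
with $\mu_i\geq 0$ the multiplier of the per-RRH budget from (\ref{BSopt}); the nonnegativity constraints $p_{ik}\ge 0$ are handled implicitly. Setting $\partial\mathcal{L}/\partial p_{ik}=0$ and applying the chain rule with $\partial A_k/\partial p_{ik}=c_{ik}/(2\sqrt{p_{ik}})$ gives a stationarity condition of the form $W\, A_k\, c_{ik}/\bigl(L\sqrt{p_{ik}}(e_k+A_k^2)\bigr) = \mu_i$; decomposing $A_k=c_{ik}\sqrt{p_{ik}}+\sum_{l\neq i,\,l\in\mathcal{K}_c}c_{lk}\sqrt{p_{lk}}$ in the numerator then rearranges this into the first equation of the proposition. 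The second equation is complementary slackness on the per-RRH power-budget constraint applied separately to each $i \in \mathcal{K}_c$.

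The main obstacle is that $R_k$ is not concave in $\boldsymbol{p}_k$: although $A_k$ is concave as a sum of $\sqrt{\cdot}$ terms, the subsequent squaring inside $\log(1+A_k^2/e_k)$ destroys concavity, so the KKT system only provides necessary first-order conditions. Since the proposition asserts only that the best response \emph{satisfies} this nonlinear system rather than making a uniqueness or sufficiency claim, this is enough to conclude, and the proof reduces to the chain-rule calculation above. The one delicate point is the $1/\sqrt{p_{ik}}$ singularity at $p_{ik}=0$: any coordinate that lies on the boundary must be handled by a separate KKT multiplier for the nonnegativity constraint (or by a limiting argument), and one checks that such boundary solutions can still be absorbed into the stated form by taking the limit $\sqrt{p_{ik}}\to 0^+$ in the stationarity equation.
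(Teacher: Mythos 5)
Your mechanical KKT derivation (Lagrangian with per-RRH budget multipliers, chain rule through $A_k=\sum_{i\in\mathcal{K}_c}c_{ik}\sqrt{p_{ik}}$, complementary slackness) is the same step the paper invokes, but your stated ``main obstacle'' is wrong, and it is precisely the point the paper's proof is devoted to. The function $A_k^2=\big(\sum_{i}|h_{ik}|\sqrt{p_{ik}d_{ik}^{-\alpha}}\big)^2$ \emph{is} concave in $(p_{1k},\dots,p_{|\mathcal{K}_c|k})$: expanding the square gives $\sum_i |h_{ik}|^2 p_{ik}d_{ik}^{-\alpha}+\sum_{i\neq r}|h_{ik}||h_{rk}|\sqrt{p_{ik}p_{rk}d_{ik}^{-\alpha}d_{rk}^{-\alpha}}$, i.e.\ a linear term plus geometric means $\sqrt{xy}$, each of which is jointly concave, so the squaring does \emph{not} destroy concavity here (this is the standard concavity of $(\sum_i\sqrt{x_i})^2$ on the nonnegative orthant). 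Since $x\mapsto\log(1+x/e_k)$ is nondecreasing and concave and $e_k$ is fixed for the best-response problem, the composition and hence $U_C$ are concave, and the linear constraints give a convex feasible set. This is what the paper proves, and it is what the proposition needs: with concavity, the KKT system is necessary \emph{and sufficient}, i.e.\ solving it actually yields the best response. Your proposal instead denies concavity and retreats to ``the KKT conditions are only necessary,'' which both misstates the mathematics and weakens the claim — without concavity a solution of the system need not be the best response at all, so your reading would not establish the proposition as intended.

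A secondary issue: your own chain rule gives $\partial U_C/\partial p_{ik}=\frac{W}{L}\cdot\frac{A_k c_{ik}/\sqrt{p_{ik}}}{e_k+A_k^2}$, and decomposing $A_k$ yields $c_{ik}^2+\frac{c_{ik}}{\sqrt{p_{ik}}}\sum_{l\neq i}c_{lk}\sqrt{p_{lk}}$ with a \emph{plus} sign, whereas the proposition prints a minus sign; your claim that the computation ``rearranges into the first equation'' glosses over this discrepancy (which appears to be a sign typo in the paper) rather than flagging it. The caveat about the $1/\sqrt{p_{ik}}$ singularity at $p_{ik}=0$ is a reasonable observation that the paper does not address, but it does not compensate for the missing concavity argument.
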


%{\sigma^2+ \sum_{l \in \mathcal{H}, }|h_{lk}|^2p_{lk}d_{lk}^{-\alpha}}
%\begin{eqnarray}
%&&\forall i \in \mathcal{K}_c, 1 \leq k \leq L, \nonumber\\
%&&c^2_{ik}p_{ik}+2c_{ik}\sqrt{p_{ik}}\sum_{l \neq i, l \in \mathcal{K}_c}c_{lk}\sqrt{p_{lk}}-\mu_i=0,\nonumber\\
%&&\mu_i\Big(\sum_{k=1}^Lp_{ik}-P_{i,\textrm{max}}\Big)=0, \hspace{0.1 cm} \forall i \in \mathcal{K}_c,
%\end{eqnarray}  
%where $c_{ik}$ is given by:
%\vspace{-0.2 cm}
%\begin{equation}
%c_{ik}= \frac{|h_{ik}|\sqrt{d_{ik}^{-\alpha}}}{\sigma^2+ \sum_{l \in \mathcal{H}, }|h_{lk}|^2p_{lk}d_{lk}^{-\alpha}}.
%\end{equation}

\begin{proof}
We first show that
$\log\bigg(1+\frac{\Big(\sum_{i=1}^{|\mathcal{K}_c|} |h_{ik}|\sqrt{p_{ik}d_{ik}^{-\alpha}}\Big)^2}{\sigma^2+I}\bigg)$ is concave in the variables $p_{ik}$ ($1 \leq i \leq |\mathcal{K}_c|$) by showing that the function $\Big(\sum_{i=1}^{|\mathcal{K}_c|} |h_{ik}|\sqrt{p_{ik}d_{ik}^{-\alpha}}\Big)^2$ is concave. This function can be written as:\small\begin{eqnarray}
\Big(\sum_{i=1}^{|\mathcal{K}_c|} |h_{ik}|\sqrt{p_{ik}d_{ik}^{-\alpha}}\Big)^2&=& \Big(\sum_{i=1}^{|\mathcal{K}_c|}|h_{ik}|^2p_{ik}d_{ik}^{-\alpha}
\nonumber\\&&+ \sum_{i \neq r}|h_{ik}||h_{rk}|\sqrt{p_{ik}p_{rk}d_{ik}^{-\alpha}d_{rj}^{-\alpha}}\Big).\nonumber \vspace{-0.2 cm}
\vspace{-0.2 cm}\end{eqnarray}\normalsize   In the above equation, the first sum  is a linear sum of $p_{ik}$ for $1 \leq i \leq|\mathcal{K}_c|$. Also, each term of the second sum can be easily seen to be concave in $p_{ik}$ and $p_{rk}$ since in general the function $\sqrt{xy}$ is concave in $x$ and $y$. Thus, the function $\Big(\sum_{i=1}^{|\mathcal{K}_c|} |h_{ik}|\sqrt{p_{ik}d_{ik}^{-\alpha}}\Big)^2$ is a sum of concave functions in $p_{ik}$ for $1 \leq i \leq|\mathcal{K}_c|$ and is therefore concave.
Also in general, the function $\log(1+x)$ is increasing concave function of $x$. Thus, the function $\log\bigg(1+\frac{\Big(\sum_{i=1}^{|\mathcal{K}_c|} |h_{ik}|\sqrt{p_{ik}d_{ik}^{-\alpha}}\Big)^2}{\sigma^2+I}\bigg)$ is an increasing concave function of $\Big(\sum_{i=1}^{|\mathcal{K}_c|} |h_{ik}|\sqrt{p_{ik}d_{ik}^{-\alpha}}\Big)^2$. Thus, it is concave using the general property that a monotonic concave transformation of a concave function is concave. It follows that the utility function of the CU in (\ref{ucloud}) is also concave since it is the sum of concave functions in $p_{ik}$, $1 \leq i \leq|\mathcal{K}_c|$. Thus, the optimal solution can be found using KKT conditions.
\end{proof}\vspace{-0.25 cm}The existence of the NE is presented next.

\begin{prop}
\emph{There exists at least one pure Nash equilibrium for the heterogeneous cloud power allocation game.}
\end{prop}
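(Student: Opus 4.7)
The plan is to invoke the classical Debreu--Fan--Glicksberg existence theorem for concave games, which guarantees a pure Nash equilibrium whenever each player's strategy set is nonempty, compact, and convex, each utility is continuous in the full strategy profile, and each utility is concave (or at least quasi-concave) in the player's own strategy. The task therefore reduces to checking these three hypotheses one by one for the game $(\mathcal{P},(\mathcal{S}_i)_{i\in\mathcal{P}},(\mathcal{U}_i)_{i\in\mathcal{P}})$ defined above.

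First, I would observe that for each BS $i\in\mathcal{H}$ the strategy set $\mathcal{S}_i$ is the intersection of the box $[0,P_{i,\max}]^L$ with the affine hyperplane $\sum_k p_{ik}=P_{i,\max}$, hence is a closed bounded convex subset of $\mathbb{R}^L$ and therefore nonempty, compact, and convex. The CU's strategy set $\mathcal{S}_C$ has the same structure on each coordinate block $i\in\mathcal{K}_c$, and the product of nonempty compact convex sets is again nonempty, compact, and convex, so $\mathcal{S}_C$ qualifies as well.

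Next, I would check continuity. Each $U_i$ for $i\in\mathcal{H}$ is a finite sum of logarithms of strictly positive rational functions of the joint power profile (the noise term $\sigma^2$ prevents any denominator from vanishing and the argument of each log is bounded away from zero), and $U_C$ has the same form with square roots of nonnegative quantities in the numerator; in both cases the utilities are continuous on the compact domain $\prod_i \mathcal{S}_i$.

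The main content of the proof is the concavity of each utility in the corresponding player's own strategy. For $i\in\mathcal{H}$, this is the standard concavity of $\sum_k \log(1+c_{ik}p_{ik})$ in $\boldsymbol{p}_i$ with the coefficients $c_{ik}$ treated as fixed positive constants determined by $\boldsymbol{p}_{-i}$; this is exactly the property that justified the water-filling best response in the preceding remark. For the CU, concavity of $U_C$ in $\boldsymbol{p}_c$ for fixed $\boldsymbol{p}_{-c}$ has already been established inside the proof of Proposition~1 (the squared-sum numerator was decomposed into a linear part and cross terms of the form $\sqrt{p_{ik}p_{rk}}$, each concave, so the whole log-expression is concave by monotone composition). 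With these three ingredients in hand, the Debreu--Fan--Glicksberg theorem applies directly and yields the existence of at least one pure-strategy Nash equilibrium.

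The only subtle point I would be careful about is the CU step: strictly speaking the Debreu--Fan--Glicksberg hypothesis requires quasi-concavity of $U_C$ jointly in the vector $\boldsymbol{p}_c=(p_{ik})_{i\in\mathcal{K}_c,\,1\le k\le L}$, not merely in each subcarrier block separately. Since Proposition~1 actually showed joint concavity in all of $\boldsymbol{p}_c$ on each subcarrier and $U_C$ is a sum over subcarriers with disjoint variable groups, joint concavity of $U_C$ in $\boldsymbol{p}_c$ follows, and this is the step I would state explicitly to close the argument cleanly.
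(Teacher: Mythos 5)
Your proposal is correct and follows essentially the same route as the paper: the paper's proof simply cites the standard existence theorem for concave games (Proposition 20.3 of the referenced tutorial, i.e., the Debreu--Fan--Glicksberg result), which is exactly the theorem you invoke. You additionally verify its hypotheses explicitly (nonempty compact convex strategy sets, continuity, and own-strategy concavity, the latter already established for the CU in Proposition~1), which the paper leaves implicit in the citation.
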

\begin{proof}
This follows directly from \cite[Proposition 20.3]{purenash}.
%The strategy space (which is the set of feasible power allocations) for each player in this game is convex, compact, and non empty. Also, as shown in the proofs of propositions 1 and 2, the utility functions of the BS and the CU are concave continuous functions of the transmit powers values which are the actions of this game. Hence, the utility functions are quasi concave in the transmission powers. According to \cite{purenash}, a game that satisfies these two conditions admits pure Nash equilibria.
\end{proof}
At the NE, the CU or each BS considers the strategies made by all other devices when choosing its own strategy. This can be detrimental to low-powered BSs who will need to respond to a high level of interference caused by the CRAN and macro BSs, and hence, will choose a more conservative power allocation. To overcome this challenge, we will next develop a new game-theoretic model based on \emph{cognitive hierarchy theory} \cite{Camerer_acognitive}.

\subsection{Cognitive Hierarchy Approach}

 Under cognitive hierarchy (CH), devices are organized into a hierarchy such that the CRAN and macro BSs are placed at the higher hierarchy levels, while low-powered BSs are placed at the lower hierarchy levels. Each player assumes that the other players belong to equal or lower hierarchy levels when selecting its strategy. This allows low-powered BSs to choose less conservative actions than in a conventional NE. We adopt the Poisson CH model in \cite{Camerer_acognitive} with the following considerations. First, players are distributed across hierarchy levels according to a Poisson distribution $f$ of rate $\tau$. Moreover, a player at level $k$ knows the true distributions of players which are at lower levels. This is known as the overconfidence assumption and it is a realistic assumption for cases in which players are humans for example. Using hierarchical approaches for power allocation in HetNets has been earlier investigated using Stackelberg games \cite{stackelberg1}. Stackelberg games, however, rely on only a two level hierarchy composed of leaders and followers, while CH allows for multiple levels of hierarchy, which is more suitable to our setting. Further, in Stackelberg games, both leaders and followers consider the actions of each others when choosing their actions. This may not be very effective for reducing interference on low-powered BSs.

%We investigate an alternative solution based on cognitive hierarchy theory in order to study if this approach can help in reducing the interference caused by the larger BSs on the smaller ones.
%
%
%
%\subsection{Cognitive Hierarchy Approach}
%

% The last assumption is that players at the lowest level $0$ are irrational and choose their strategy randomly according to a uniform distribution.

 In the HetNet tier, the maximum transmit power of a BS is dependent on its type. For instance, a larger BS has a higher maximum transmit power, and thus, it can cause higher interference on smaller BSs. To mitigate this high interference, we group the BSs based on their type, using the CH level approach. In particular, femtocell BSs, who have the lowest power, are assumed to belong to level 1 while pico BSs belong to level 2, macro BSs belong to level 3, and the CRAN belongs to the last level. 
In order to reduce the interference among BSs of the same type, the overconfidence assumption of the Poisson CH model is modified so that BSs consider the actions of BSs of the same type. However, in CH, each BS takes its action based on its beliefs and does not learn the power values of the BSs of the same type. Thus, we assume that each player at level $k$ believes that all other players of the same type act using the same strategy. Consequently, the utility of BS $i$ at level $m$ is:
\begin{equation}
U_{im}(\boldsymbol{p}_i,\boldsymbol{p}_{-i})=\frac{W}{L}\sum_{k =1 }^{L}\log\Big(1+ \frac{|h_{ik}|^2p_{ik}d_{ik}^{-\alpha}}{\sigma^2+\mathbb{E}_{g_m}[I(\boldsymbol{p}_{-i})])} \bigg),\label{BSCH}
\vspace{-0.3 cm}
\end{equation}
where
\begin{eqnarray} 
\vspace{-0.6 cm}
\mathbb{E}_{g_m}[I(\boldsymbol{p}_{-i})]&=& g_m(m)\sum_{l \in \mathcal{V}, l \neq i}|h_{lk}|^2p_{ik}d_{lk}^{-\alpha}\nonumber\\
&&+\sum_{l \in \mathcal{V}, l \neq i}\sum_{h=0}^{m-1}g_m(h)|h_{lk}|^2p_{lk}(h)d_{lk}^{-\alpha},\nonumber
\vspace{-0.6 cm}\end{eqnarray}\normalsize and $g_m(h)$ is the proportion of players at level $h$ ($1 \leq h \leq m$). It is given by: $g_m(h)=\frac{f(h)}{\sum_{i=0}^{m}f(i)}$.

\vspace{0.2 cm}
For the CRAN, we assume that the CU is aware that there are no other players at the same level. This can be achieved by selecting a proper value of the Poisson distribution parameter $\tau$ that satisfies $f(4)\approx 0$ in the CH model. One suitable value here is: $\tau=1$.
The utility of the CU at level 4 is thus given by:\vspace{-0.2 cm}\begin{equation}\small 
U_{C}(\boldsymbol{p}_C,\boldsymbol{p}_{-C})=\frac{W}{L}\sum_{k =1 }^{L}\log\bigg(1+\frac{\Big(\sum_{i=1}^{|\mathcal{K}_c|} |h_{ik}|\sqrt{p_{ik}d_{ik}^{-\alpha}}\Big)^2}{\sigma^2+\mathbb{E}_{g_4}[I(\boldsymbol{p}_{-i})]} \bigg),\vspace{-0.2 cm}
\end{equation}
where
\vspace{-0.2 cm}
\begin{equation} 
\mathbb{E}_{g_4}[I(\boldsymbol{p}_{-i})]=\sum_{l =0}^{\mathcal{H}}\sum_{h=0}^{3}g_4(h)|h_{lk}|^2p_{lk}(h)d_{lk}^{-\alpha}.\vspace{-0.1 cm}
\end{equation}\normalsize

%In the CH model, a player computes its optimal strategy based on its own beliefs and not through learning from other players as in NE. Hence,
 The equilibrium for the CH case is defined next.
\begin{defn}
 A strategy profile $\boldsymbol{p}^*$ is said to constitute a \emph{cognitive hierarchy equilibrium (CHE)} iff:
 \vspace{-0.1 cm}
\begin{equation}
\boldsymbol{p}^*_j= \arg \max_{\boldsymbol{p}_j \in \mathcal{S}_j} U_{jm}(\boldsymbol{p}_j,\boldsymbol{p}^*_{-j}) \hspace{0.1 cm} \forall j \in \mathcal{P},
\vspace{-0.1 cm}
\end{equation}
where $m$ is the CH level of player $j$. For this CH game, the equilibrium strategy for each BS and for the CU are given respectively by the following two propositions.
\end{defn}
\begin{prop}\emph{The CHE strategy for each BS $i \in \mathcal{H}$ at CH level $m$ is achieved by solving the following system of equations:
	\begin{eqnarray}
	&&\forall 1\leq k \leq L, \nonumber\\
	&&\frac{W c_{ik} I_{m-1}}{L(I_{m-1}+D_mp^*_{ik})(I_{m-1}+(c_{ik}+D_m)p^*_{ik})}-\mu_i=0, \nonumber\\
	&&\mu_i(\sum_{k=1}^Lp^*_{ik}-P_{i,\textrm{max}})=0,
\label{BSCHE}\vspace{-0.6 cm}	\end{eqnarray}\vspace{-0.2 cm}}

	\emph{where} $c_{ik}=|h_{ik}|^2d_{ik}^{-\alpha}$,
$D_m=g_m(m)\sum_{l \in \mathcal{V}, l \neq i}|h_{lk}|^2d_{lk}^{-\alpha},$
	
	$I_{m-1}=\sigma^2+ \sum_{l \in \mathcal{V}, l \neq i}\sum_{h=0}^{m-1}g_m(h)|h_{lk}|^2p^*_{lk}(h)d_{lk}^{-\alpha}.$
\end{prop}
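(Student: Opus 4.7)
The plan is to follow exactly the same template used in Remark~1 and Proposition~1: argue that BS $i$'s level-$m$ best-response problem is a concave maximization over a convex (linear) constraint set, so KKT conditions are both necessary and sufficient, and then verify that the stationarity and complementary-slackness conditions reduce to the displayed system.

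First I would simplify each summand of $U_{im}$. Because every BS $l\neq i$ of the same type is assumed to play the same $p_{lk}$, the quantity $\mathbb{E}_{g_m}[I(\boldsymbol{p}_{-i})]$ is affine in $p_{ik}$ and can be written as $I_{m-1}+D_m p_{ik}$, where $I_{m-1}$ collects the contributions of strictly lower levels (and the noise) and $D_m=g_m(m)\sum_{l\in\mathcal V,l\neq i}|h_{lk}|^2 d_{lk}^{-\alpha}$. The $k$-th log term then reads
\begin{equation*}
\log\!\Bigl(1+\tfrac{c_{ik}p_{ik}}{I_{m-1}+D_m p_{ik}}\Bigr)=\log(I_{m-1}+(c_{ik}+D_m)p_{ik})-\log(I_{m-1}+D_m p_{ik}).
\end{equation*}

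Next I would prove concavity of this expression in $p_{ik}\ge 0$. Differentiating the rewritten form twice, the first derivative collapses to $\tfrac{c_{ik}I_{m-1}}{(I_{m-1}+(c_{ik}+D_m)p_{ik})(I_{m-1}+D_m p_{ik})}$, and the second derivative is manifestly negative since both factors in the denominator are positive and their derivatives contribute with a minus sign. Because $U_{im}$ is a sum over $k$ of functions each depending only on its own $p_{ik}$, concavity on each coordinate implies joint concavity on $\mathcal S_i$; the strategy set itself is convex (a simplex in $[0,P_{i,\max}]^L$). Thus the best response is uniquely characterized by the KKT conditions.

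Finally I would write the Lagrangian $\mathcal L=U_{im}(\boldsymbol p_i,\boldsymbol p_{-i})-\mu_i\bigl(\sum_k p_{ik}-P_{i,\max}\bigr)+\sum_k\lambda_{ik}p_{ik}$, drop the nonnegativity multipliers on the interior solution (as done throughout the paper), and substitute the first-derivative expression computed above. Setting $\partial\mathcal L/\partial p_{ik}=0$ yields
\begin{equation*}
\frac{W\,c_{ik}I_{m-1}}{L(I_{m-1}+D_m p^*_{ik})(I_{m-1}+(c_{ik}+D_m)p^*_{ik})}=\mu_i,
\end{equation*}
which is precisely the first equation of~(\ref{BSCHE}); complementary slackness on the sum-power constraint gives the second. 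The main (and only nontrivial) step is verifying concavity once $\mathbb{E}_{g_m}[I]$ is treated as affine in $p_{ik}$: the routine waterfilling argument of Remark~1 no longer applies verbatim because $p_{ik}$ now appears in both the numerator and denominator of the SINR, so the derivative calculation above is what makes KKT tractable.
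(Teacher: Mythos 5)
Your proposal is correct and follows essentially the paper's own route: establish concavity of each per-subcarrier term, which has the form $\log\bigl(1+\tfrac{ax}{bx+c}\bigr)$ because the same-level belief makes $\mathbb{E}_{g_m}[I]$ affine in $p_{ik}$, and then read off the stated system from the KKT stationarity and complementary-slackness conditions. In fact your write-up is somewhat more complete than the paper's proof, which merely asserts the concavity of $\log\bigl(1+\tfrac{ax}{bx+c}\bigr)$ and leaves the derivative computation and the KKT step implicit, whereas you verify them explicitly.
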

\begin{proof}
	It can be easily shown that the function $\log(1+\frac{ax}{bx+c})$ is concave in $x$ $\forall a\geq 0,b \geq 0,c \geq 0,x \geq 0$. Thus, the term $\log\Big(1+ \frac{|h_{ik}|^2p_{ik}d_{ik}^{-\alpha}}{\sigma^2+\mathbb{E}_{g_m}[I(\boldsymbol{p}_{-i})])} \bigg)$ is concave in $p_{ik}$. Thus, the utility in (\ref{BSCH}) is a sum of concave functions.
\end{proof}
\begin{prop}
		\emph{The CHE strategy of the CU is the solution of the nonlinear equations:
		\begin{eqnarray}
		&&\forall i \in \mathcal{K}_c, 1 \leq k \leq L, \hspace{5 cm}\nonumber\\
		&& \frac{W\Big(c^2_{ik}-\frac{c_{ik}}{\sqrt{p^*_{ik}}}\sum_{l \neq i, l \in \mathcal{K}_c}c_{lk}\sqrt{p^*_{lk}}\Big)}{L\Big(e_k+\Big(\sum_{i=1}^{|\mathcal{K}_c|} |h_{ik}|\sqrt{p^*_{ik}d_{ik}^{-\alpha}}\Big)^2\Big)}-\mu_i=0,\nonumber\\
		&&\mu_i\Big(\sum_{k=1}^Lp^*_{ik}-P_{i,\textrm{max}}\Big)=0, \hspace{0.1 cm} \forall i \in \mathcal{K}_c, \label{CUCHE} \vspace{-0.2 cm}
		\end{eqnarray} 
		\vspace{-0.1 cm} 
		where $c_{ik}$ and $e_k$ are given by respectively:}
\end{prop}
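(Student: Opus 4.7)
The plan is to reduce the argument to the one already given for Proposition 1. The key observation is that under the CH-4 belief, the denominator $\sigma^2 + \mathbb{E}_{g_4}[I(\boldsymbol{p}_{-i})]$ of the CU's per-subcarrier rate depends only on the powers of players at levels $0$ through $3$, which the CU treats as fixed when computing its best response. Hence, from the CU's viewpoint this quantity plays exactly the role of the constant $e_k$ in Proposition 1, so the per-subcarrier term has the identical functional form $\log\bigl(1 + (\sum_i |h_{ik}|\sqrt{p_{ik}d_{ik}^{-\alpha}})^2 / e_k\bigr)$ in the optimization variables $\{p_{ik}\}_{i\in\mathcal{K}_c,\,1\le k\le L}$.

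First I would reuse the concavity argument from the proof of Proposition 1 essentially verbatim: expanding the squared sum produces a linear part $\sum_i |h_{ik}|^2 p_{ik}d_{ik}^{-\alpha}$ plus cross terms of the form $|h_{ik}||h_{rk}|\sqrt{d_{ik}^{-\alpha}d_{rk}^{-\alpha}}\,\sqrt{p_{ik}p_{rk}}$, each jointly concave in $(p_{ik},p_{rk})$ because $\sqrt{xy}$ is concave on the nonnegative orthant. Monotone concave composition with $x\mapsto \log(1+x/e_k)$ preserves concavity, and summing over subcarriers gives that $U_C$ is concave in the CU's joint strategy vector. The feasible set $\mathcal{S}_C$ is the intersection of the nonnegative orthant with one linear power-budget constraint per RRH, so Slater's condition holds and the KKT conditions are both necessary and sufficient for a global optimum.

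Next I would write the Lagrangian $\mathcal{L}=U_C-\sum_{i\in\mathcal{K}_c}\mu_i\bigl(\sum_k p_{ik}-P_{i,\textrm{max}}\bigr)$ and differentiate. By the chain rule, $\partial_{p_{ik}}\log(1+f_k/e_k)=(\partial_{p_{ik}} f_k)/(e_k+f_k)$, where $f_k=(\sum_l c_{lk}\sqrt{p_{lk}})^2$ with $c_{lk}=|h_{lk}|\sqrt{d_{lk}^{-\alpha}}$; a direct computation gives $\partial_{p_{ik}} f_k=(c_{ik}/\sqrt{p_{ik}})\sum_l c_{lk}\sqrt{p_{lk}}$, which can then be split into the diagonal contribution $c_{ik}^2$ and the off-diagonal contribution $(c_{ik}/\sqrt{p_{ik}})\sum_{l\neq i}c_{lk}\sqrt{p_{lk}}$. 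Setting $\partial\mathcal{L}/\partial p_{ik}=0$ for every $(i,k)$ and appending the complementary slackness condition $\mu_i\bigl(\sum_k p^*_{ik}-P_{i,\textrm{max}}\bigr)=0$ yields exactly the stated system.

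The only real obstacle is bookkeeping: matching the algebraic rearrangement of $\partial_{p_{ik}}f_k$ to the specific combination appearing in the displayed stationarity equation, and verifying that the KKT system admits at least one solution in $\mathcal{S}_C$. The latter follows from concavity plus compactness of $\mathcal{S}_C$, so no new conceptual ingredient beyond Proposition 1 is required. This is expected, because the CH modification only alters what the CU treats as a constant interference background, not the structure of its own decision problem.
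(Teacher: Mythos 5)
Your proposal is correct and is essentially the paper's own argument, which simply observes that under the CH belief the interference term $\sigma^2+\mathbb{E}_{g_4}[I(\boldsymbol{p}_{-i})]$ is a constant $e_k$ from the CU's perspective, so the utility is the same concave function as in Proposition~1 and the system follows from the KKT conditions; your version merely spells out the concavity and Lagrangian steps the paper leaves implicit. The only point to note is that your (correct) chain-rule computation gives $\partial_{p_{ik}}f_k=c_{ik}^2+\frac{c_{ik}}{\sqrt{p_{ik}}}\sum_{l\neq i}c_{lk}\sqrt{p_{lk}}$ with a plus sign, so the minus sign in the displayed stationarity equation appears to be a typo in the proposition itself rather than a gap in your reasoning.
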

	
$c_{ik}= {|h_{ik}|\sqrt{d_{ik}^{-\alpha}}},$

$e_k=\sigma^2+\sum_{l =0}^{\mathcal{H}}\sum_{h=0}^{3}g_4(h)|h_{lk}|^2p^*_{lk}(h)d_{lk}^{-\alpha}.$
\begin{proof}
	The utility of the CU in the CH case is still the same function of $p_{ik}$ as in (\ref{ucloud}). Hence, the function is concave and the above equations are obtained by applying the KKT conditions. 
\end{proof}
\vspace{-0.3 cm}
Based on (\ref{BSCHE}) and (\ref{CUCHE}), each player at level $m$ finds the level $h$ CHE power $p^*_{lk}(h)$ for each other player $l$ and for all $h < m$ in order to find its CHE strategy.

\section{Simulation Results}
To assess the performance achieved by both NE and CH, the following values are considered: $|\mathcal{K}_c|=40$, $|\mathcal{N}_c|=70$,  $|\mathcal{K}_m|=5$, $|\mathcal{N}_i|=25$ $\forall i \in \mathcal{K}_m$, $|\mathcal{K}_p|=5$, $|\mathcal{N}_i|=15$ $\forall i \in \mathcal{K}_p$, $|\mathcal{K}_f|=5$, $|\mathcal{N}_i|=7$ $\forall i \in \mathcal{K}_f$, $L=8$, $\sigma^2=-90.8$ dBm, and $W=100$ MHz.
Also, the BSs of the same type are assumed to have the same maximum transmit power such that: $P_{c,\textrm{max}}=30$ dBm, $P_{m,\textrm{max}}=37$ dBm, $P_{p,\textrm{max}}=27$ dBm, $P_{f,\textrm{max}}=20$ dBm. The RRHs, the BSs, and the CRAN users are deployed uniformly within a $3 \times 3$ km rectangular grid. The HetNet users are deployed uniformly within a circle centered at each BS of radius equal to the coverage of each BS. The coverage radius of the macro, pico, and femto BSs are set to: $r_m=1000$ m, $r_p=150$ m, and $r_f=10$ m respectively. All channel coefficients are generated according to a Rayleigh distribution with variance $10$ dBm. Statistical results are averaged over a large number of independent runs. For comparisons, we consider the NE, the proposed CHE, and an equal power allocation scheme.
\begin{figure}[t]
	\centering
	\includegraphics[width=8 cm,height=5cm,angle=0]{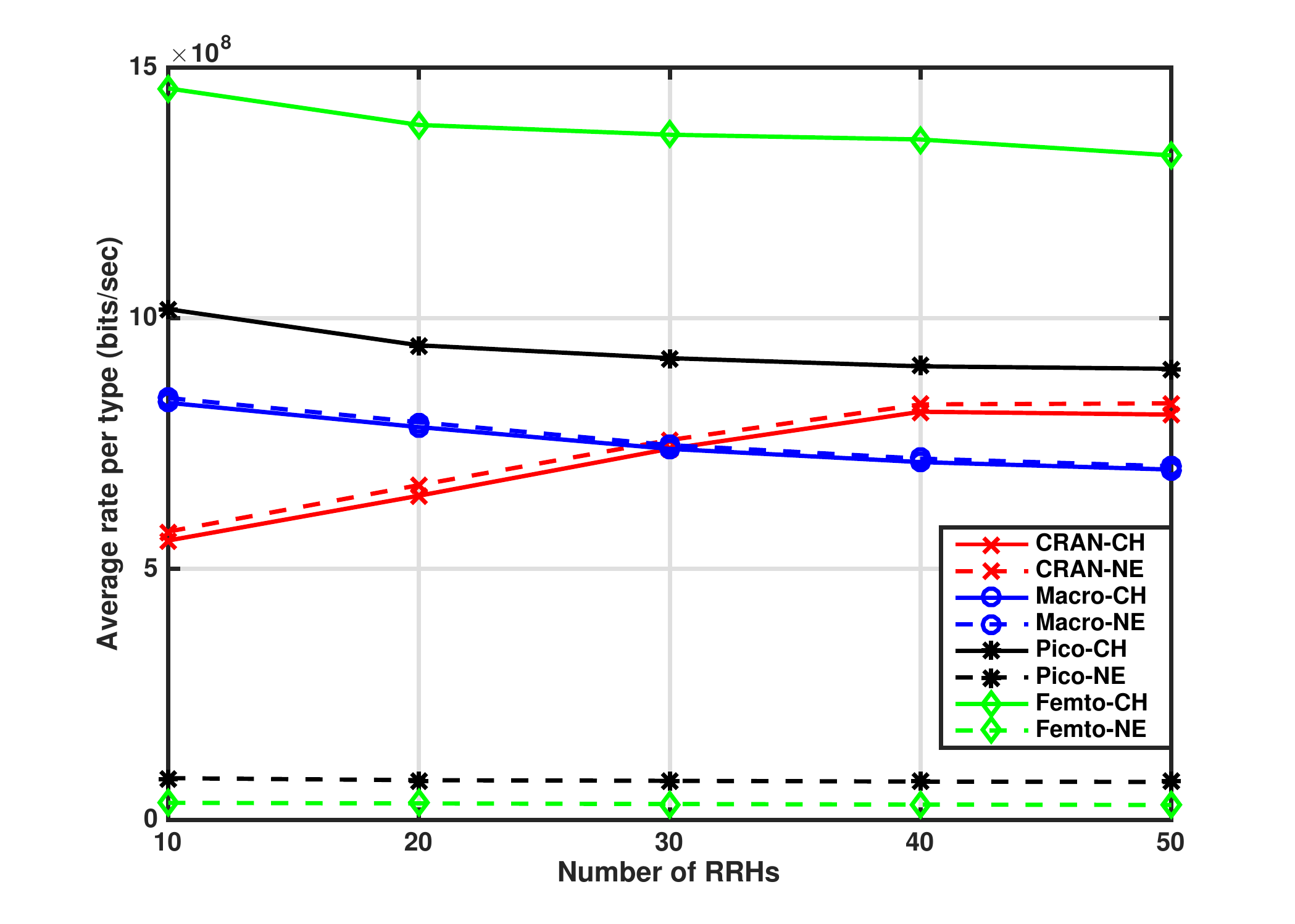}
	\caption{ Average rate per type versus number of RRHs.
	}\vspace{-0.4 cm}\label{RRHrate}
\end{figure}

\begin{figure}[t]
	\centering
	\includegraphics[width=8 cm,height=5cm,angle=0]{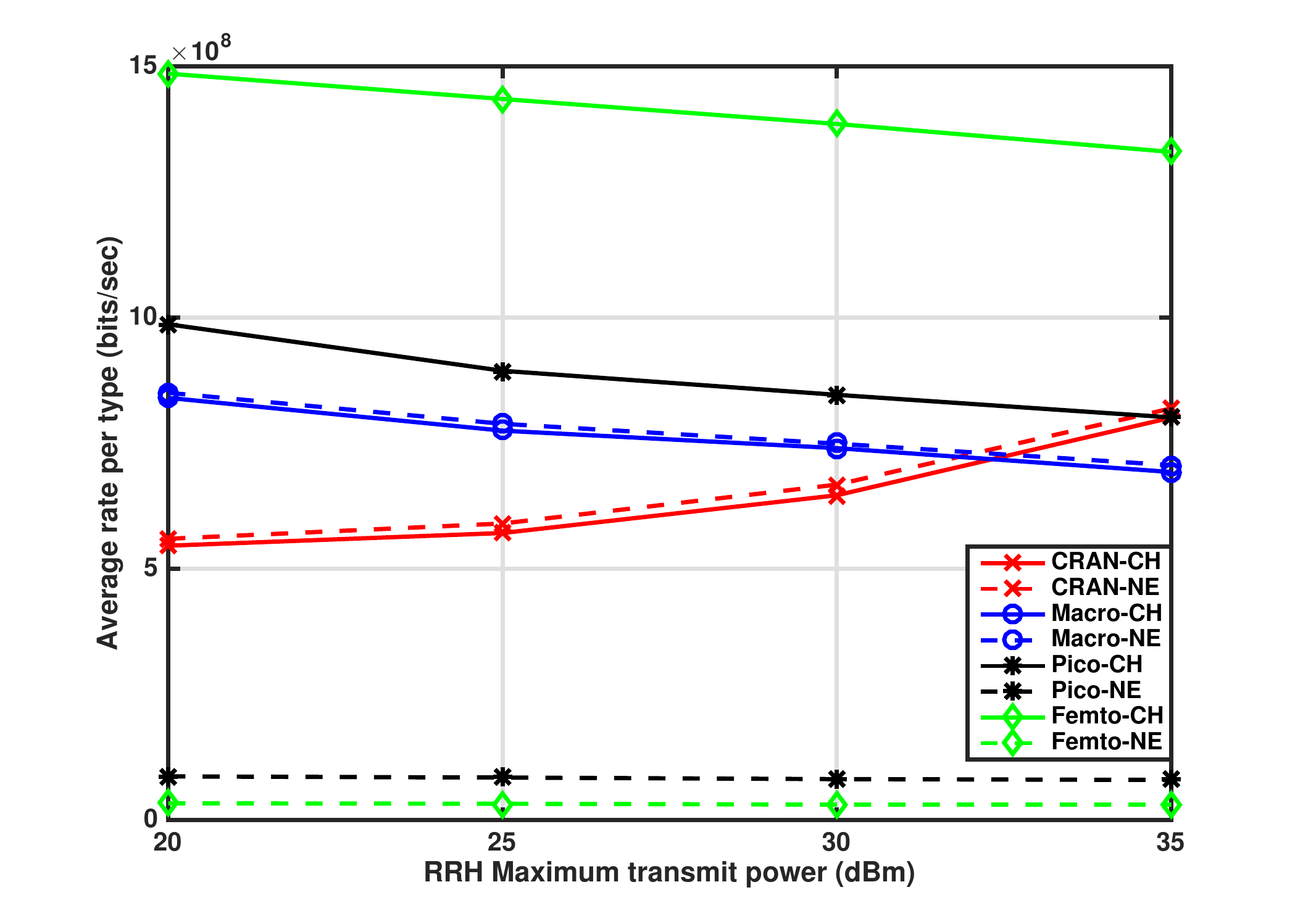}
	\caption{ Average rate per type versus RRH maximum transmit power.
	}\vspace{-0.4 cm}\label{RRHpowrate}
\end{figure}

\begin{figure}[t]
	\centering
	\includegraphics[width=8 cm,height=5cm,angle=0]{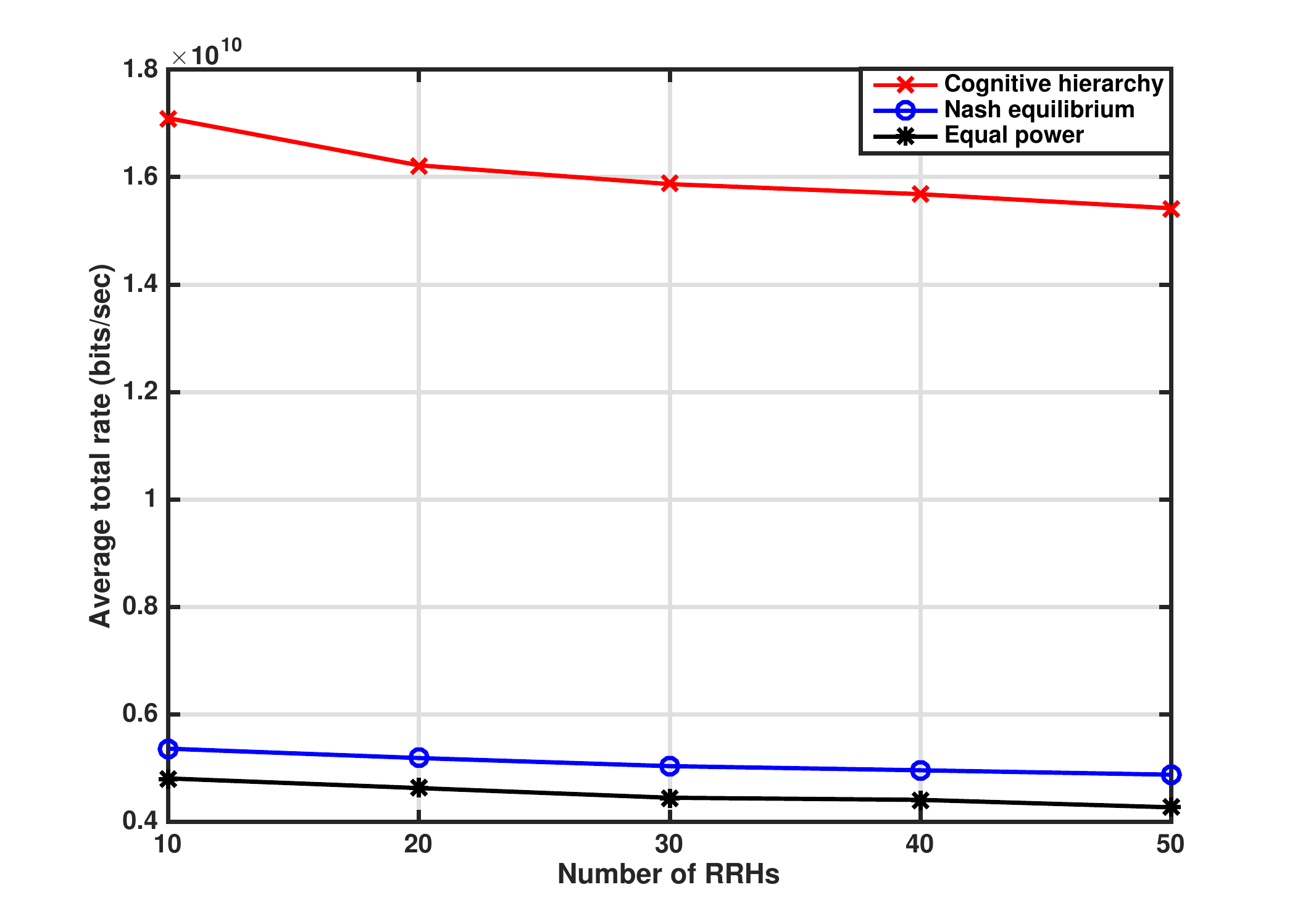}
	\caption{Average total rate versus number of RRHs.
	}\vspace{-0.4 cm}\label{RRHtotal}
\end{figure}

\begin{figure}[t]
	\centering
	\includegraphics[width=8 cm,height=5cm,angle=0]{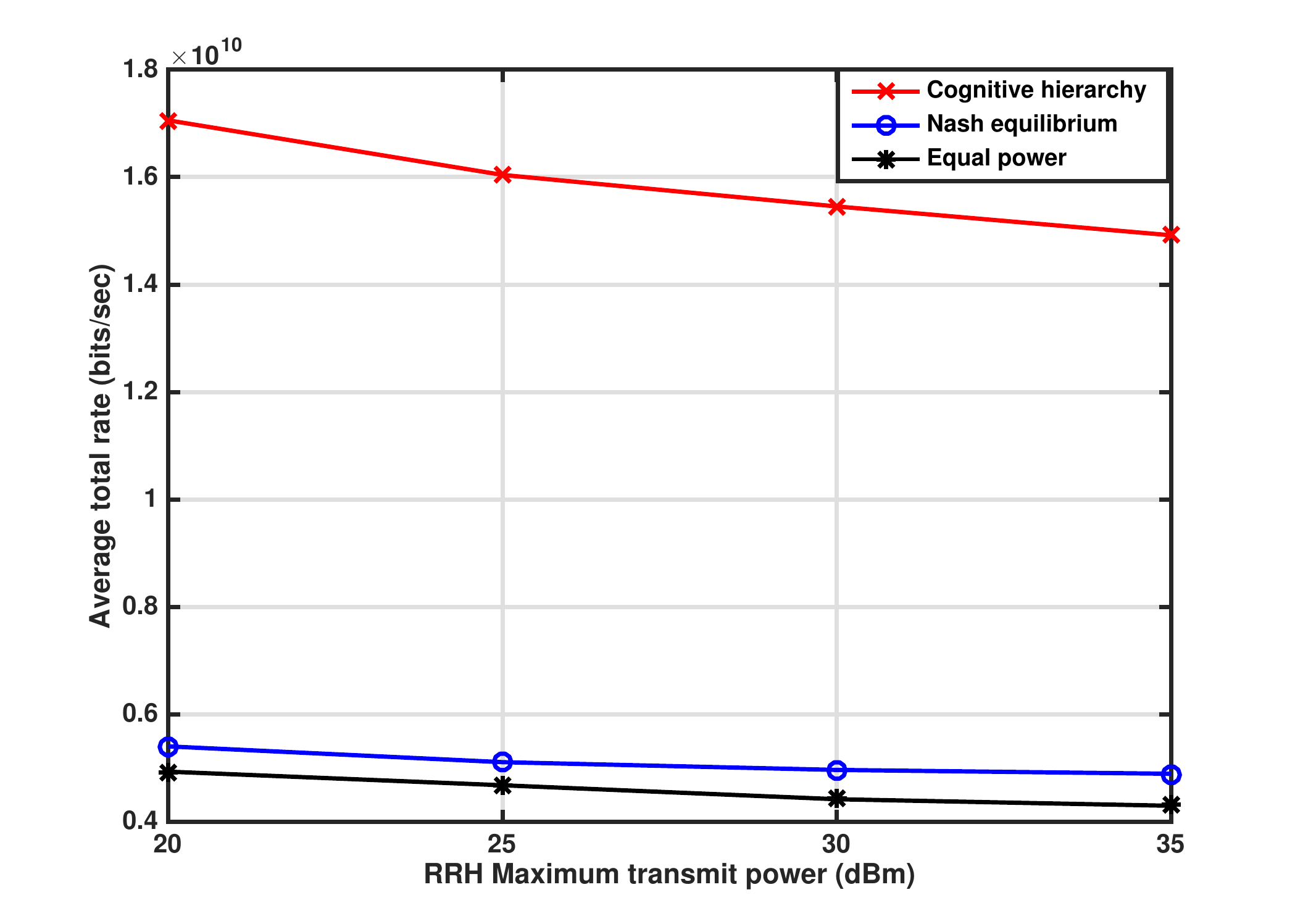}
	\caption{Average total rate versus RRH maximum transmit power. }\vspace{-0.4 cm}\label{RRHpowtotal}
\end{figure}

Fig. \ref{RRHrate} shows that average rate of the CRAN increases with the number of RRH while the average rate of the BSs in $\mathcal{H}$ decreases due to interference.
In this figure, we can see a drastic increase in the sum-rate of femtocells and picocells under CH as opposed to the NE. The average sum-rate of the pico and femto BS increases from $10^7$ bits/sec to $10^8$ and $10^9$ bits/sec, respectively. As for the macro cells and the CRAN, the average sum-rate decreases slightly with CH. The average decrease is only $2.4\%$ for macro BSs and $2.6\%$ for the CRAN. 

Fig. \ref{RRHpowrate} shows the sum-rate of the CRAN varies with the maximum RRH transmit power. Clearly,  the average CRAN sum-rate increases while that of the BSs decreases due to the increased interference. In Fig. 2, we can also see, analogously to Fig. 1, that CH yields significant overall rate improvements, particularly for pico and femto BSs.

Figs. 3 and 4 show the average total system rate as the number of RRHs and their maximum power vary, respectively.
Fig. \ref{RRHtotal} shows a considerable improvement in the total system rate when CH is used compared to both cases of NE and equal power policy. The improvement in the total system rate is on the average double the total sum-rate of the NE and the equal power policy. Fig. \ref{RRHpowtotal} further corroborates such rate improvements at all RRH transmit power values. Clearly, the proposed CH power allocation approach improves considerably the performance of the low-powered (pico and femto) BSs without degrading the performance of the CRAN and the macro BSs. It also results in a considerable improvement in the overall system performance.

%Figures \ref{RRHrate} and \ref{RRHpowrate} show that average rate of the CRAN increases with the number of RRH as well as the maximum transmit power while the average rate of the base stations in $\mathcal{H}$ decreases due to interference.
%Both figures \ref{RRHrate} and \ref{RRHpowrate} show a drastic increase in the average sum rate for both femto cells and pico cells when cognitive hierarchy is used as opposed to the case of Nash equilibrium. The average sum rate of the pico and femto BS jumps from the order of $10^7$ bits/sec to the order of $10^8$ and $10^9$ bits/sec respectively. As for the macro cells and the CRAN, the average sum rate decreases slightly with cognitive hierarchy. The average decrease is only $2.4\%$ for macro base stations and $2.6\%$ for the CRAN. Also, the performance is still acceptable since the sum rate is in both cases in the order of $10^8$ bits/sec. Also, both figures \ref{RRHtotal} and \ref{RRHpowtotal} show a considerable improvement in the total system rate when CH is used compared to both cases of NE and equal power policy. The improvement in the total system rate is on average $215\%$ compared to total sum rate of NE and $227\%$ compared to the total sum rate of equal power policy.
%  Thus based on these results, the proposed CH power allocation approach improves considerably the performance of the low-powered (pico and femto) base stations while it does not degrade the performance of the CRAN and the macro base stations. It also results in a considerable improvement in the overall system performance.

\section{Conclusion} 
In this paper, we have studied the problem of power control in a hybrid cellular network in which a CRAN and a small cell heterogeneous network co-exist. We have formulated the problem as a noncooperative game and we have studied the solution under a conventional Nash equilibrium and under a case in which the base stations are grouped in hierarchies according to the framework of cognitive hierarchy theory. Simulation results have shown a considerable system performance in terms of the total sum-rate compared to conventional solutions. The results have also shown that the proposed CH based approach reduces considerably the interference on the low-powered pico and femto cells without jeopardizing the performance of macro base stations and the CRAN.
%We have formulated the problem of distributed power allocation in a heterogeneous cloud network as a non cooperative game. We have derived the Nash equilibrium solution of the problem, and then, we have proposed a power allocation method based on cognitive hierarchy theory. Our results show  a considerable system performance in terms of the total sum rate reaching $215\%$ compared to NE and $227\%$ compared to the equal power policy.
%Also, the results show the proposed CH based approach reduces considerably the interference on the low-powered pico and femto cells while not degrading considerably the performance of macro cells and the CRAN.

%\begin{def}
%	Cognitive Hierarchy Equilibrium
%	\end{def}

%It is then required for each base station $i$ to find its transmission power $P_{T_i}$ in a distributed manner so that the sum of users rates in the network is maximized i.e. \begin{eqnarray}
%\max_{P_{T_i}, i\in \mathcal{M} \cup \mathcal{P}} \sum_{i\in \mathcal{M} \cup \mathcal{P}}\sum_{j \in \mathcal{N}_i}\frac{W}{|\mathcal{N}_i|}log\bigg(1+\frac{AP_{T_i}d_{ij}^{-\alpha}}{\sigma^2+\sum_{k\neq i} AP_{T_k}d_{kj}^{-\alpha} }\bigg)
%\end{eqnarray}

\def\baselinestretch{0.9}

% that's all folks
\end{document}